\newtheorem{Theorem}{Theorem}
\newtheorem{Lemma}{Lemma}
\begin{document}

\title{Computational power of symmetry-protected topological phases}
\author{David T. Stephen}
\affiliation{Department of Physics and Astronomy, University of British Columbia, Vancouver, British Columbia V6T 1Z1, Canada}
\author{Dong-Sheng Wang}
\affiliation{Department of Physics and Astronomy, University of British Columbia, Vancouver, British Columbia V6T 1Z1, Canada}
\author{Abhishodh Prakash}
\affiliation{C. N. Yang Institute for Theoretical Physics and Department of Physics and Astronomy, State University of New York at Stony Brook, Stony Brook, NY 11794-3840, USA}
\author{Tzu-Chieh Wei}
\affiliation{C. N. Yang Institute for Theoretical Physics and Department of Physics and Astronomy, State University of New York at Stony Brook, Stony Brook, NY 11794-3840, USA}
\author{Robert Raussendorf}
\affiliation{Department of Physics and Astronomy, University of British Columbia, Vancouver, British Columbia V6T 1Z1, Canada}
\date{\today}

\begin{abstract}
We consider ground states of quantum spin chains with symmetry-protected topological (SPT) order as resources for measurement-based quantum computation (MBQC). We show that, for a wide range of SPT phases, the computational power of ground states is uniform throughout each phase. This computational power, defined as the Lie group of executable gates in MBQC, is determined by the same algebraic information that labels the SPT phase itself. We prove that these Lie groups always contain a full set of single-qubit gates, thereby affirming the long-standing conjecture that general SPT phases can serve as computationally useful phases of matter.

\end{abstract}
\pacs{03.67.Mn, 03.65.Ud, 03.67.Ac}
\maketitle
 
\textit{Introduction.} 
In many-body physics, the essential properties of a quantum state are determined by the phase of matter in which it resides. 
 Recent years have witnessed tremendous progress in the discovery and classification of quantum phases~\cite{Levin2005,Fidkowski2011,Chen2011,Schuch2011,
Chen2013,Barkeshli2014,Gu2014,Kapustin2015,Song2017,Lan2016}, and it is thus pertinent to ask---what can a phase of matter be used for? A traditional example is the ubiquitous superconductor, while newly discovered phases such as topological insulators~\cite{Qi2011} and quantum spin liquids~\cite{Balents2010} have promising future applications. 
Quantum phases are useful in quantum information processing as well: certain topological phases allow for error-resilient topological quantum computation via the braiding and fusion of their anyonic excitations~\cite{Kitaev2003,Nayak2008}. These applications all operate due to properties of a phase rather than a particular quantum state, hence they enjoy passive protection against certain sources of noise and error.
 
In this letter, we establish a general connection between the symmetry-protected topological (SPT) phases in one dimension (1D) ~\cite{Chen2011,Schuch2011,Chen2013} and quantum computation. To do this we use the framework of measurement-based quantum computation (MBQC)~\cite{Raussendorf2001,Raussendorf2003}, in which universal computation is possible using only single-body measurements on an entangled many-body system. The computational power of an MBQC scheme, defined by the set of logical gates that can be performed using measurements, is related to the entanglement structure of the many-body ground state. Whether this computational power is particular to individual states, or a property of a phase as above, is a long-standing open problem~\cite{Barrett2009,Doherty2009,Miyake2010,Bartlett2010,
Darmawan2012,Else2012,Miller2014,Prakash2014,Wei2014a,
Nautrup2015,Else2012a,Wang2016,Wei2017}.  An important early result showed that every ground state within certain SPT phases has the ability to faithfully transport quantum information along a 1D chain; however, ``universal'' single-qubit gates appeared to be properties only of special points in the phases~\cite{Else2012a}. Later, it was shown that, for one particular SPT phase (namely one that is protected by $S_4$ symmetry), universal single-qubit gates can be implemented throughout the entire phase~\cite{Miller2014}. Yet, it remains unknown whether a general SPT phase can serve as such a \textit{computational phase of matter}.

\begin{figure}
\centering
\includegraphics[width=\linewidth]{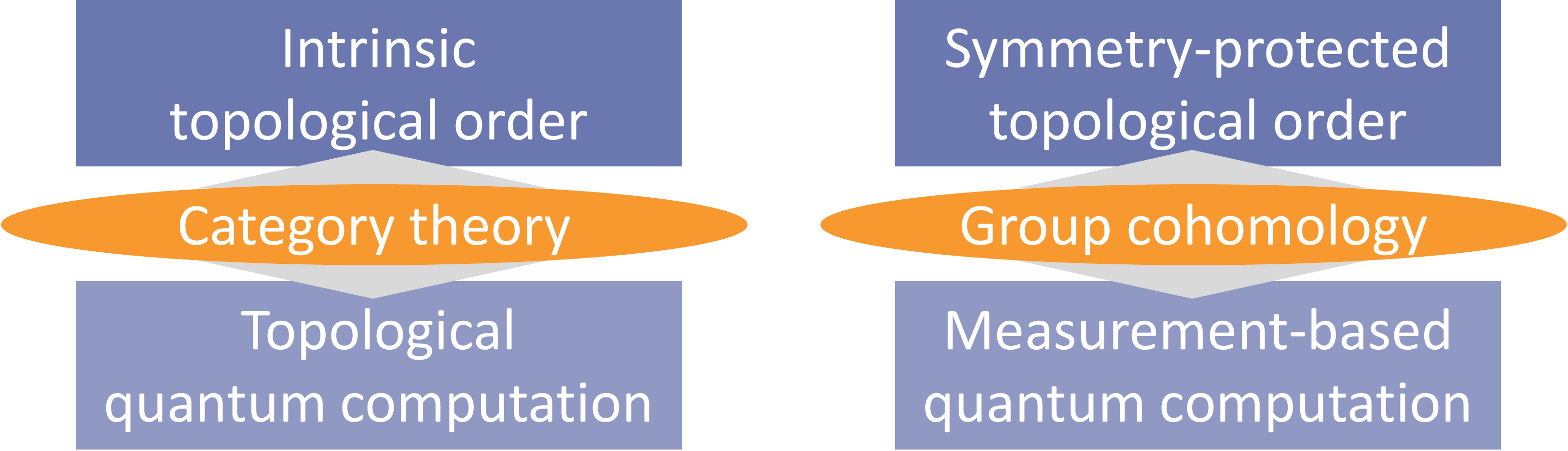}
\caption{In the same way that the language of category theory allows us to classify gates that can be executed by braiding the anyonic excitations of topologically ordered systems in 2D~\cite{Freedman2003,Nayak2008,Rowell2009}, group cohomology determines the gates implementable in measurement-based quantum computation using 1D resource states with symmetry-protected topological order.}
\label{fig:chart}
\end{figure}

Here, we construct a general computational scheme that harnesses the part of a ground state that is fully constrained by symmetry. This part is uniform throughout the SPT phase, and therefore the computational power in our scheme is a property of SPT phases rather than individual states. This power is determined by the same algebraic structure that is used to classify the SPT phases, namely group cohomology. This establishes a firm connection between SPT order and the computational power of many-body ground states. 

We can use this connection to prove that universal single-qubit gates are a property of all phases considered by Ref.~\cite{Else2012a}, and many more. Going beyond this, we identify classes of phases that also allow operations on qudits of arbitrarily large dimension. Overall, our results highlight how the algebraic classification of quantum phases can contribute to the study of the structures responsible for quantum computational power, as outlined in Fig.~\ref{fig:chart}

In the following, we begin by reviewing the virtual space picture of MBQC~\cite{Gross2007,*Gross2007a}, which aids our subsequent analysis. We then introduce the three key elements of our scheme, and demonstrate their use through the examples of the Affleck-Kennedy-Lieb-Tasaki (AKLT) state and the Haldane phase before generalizing to other SPT phases. We finish by using the algebraic classification of SPT phases to determine their computational power.

\textit{Computation in Virtual Space}.
We consider MBQC in the virtual space picture, where states are represented in the matrix product state (MPS) form~\cite{Perez-Garcia2006}. The wave function ${|\psi\rangle}$ of a 1D system of $N$ interacting sites of local dimension $d$ (i.e. a spin chain) can be written in MPS form by introducing the square matrices $A^i$, $i=0,\dots,d-1$, such that
\begin{equation} \label{eq:mps}
|\psi\rangle=\sum_{i_1,\dots,i_N} \langle R|A^{i_N}A^{i_{N-1}}\dots A^{i_1}|L\rangle {|i_1\dots i_N\rangle},
\end{equation}
where ${|R\rangle}$, ${|L\rangle}$ are states in the so-called ``virtual space'' that encode the boundary conditions of the finite chain. The MPS formalism leads to a useful interpretation of MBQC which occurs in virtual space~\cite{Gross2007,*Gross2007a}: measuring the leftmost spin in the chain with outcome ${|s\rangle}$ reduces chain length by one and evolves the virtual system as $|L\rangle\to \left(\sum_i \langle s|i\rangle A^i \right) |L\rangle$. With a proper choice of measurement basis, this can correspond to unitary evolution and can simulate computation up to outcome-dependent byproduct operators. Since we consider only 1D resource states, we say a state is \emph{universal} if measurement can induce a full set of gates for a single qudit, corresponding to operators in $SU(D)$ on some $D$-level subspace in virtual space.

A simple example is the spin-1 AKLT state, which is well-known to be a universal resource~\cite{Brennen2008}. The MPS matrices are the Pauli matrices, $A^i=\sigma^i$, with respect to the \textit{wire basis} $\mathcal{B}=\left\{|x\rangle,{|y\rangle},{|z\rangle}\right\}$ where ${|i\rangle}$ is the 0 eigenstate of the spin-1 operator $S^i$. To achieve a rotation by $\theta$ about the $z$-axis, we measure in the basis $\mathcal{B}(z,\theta)=\left\{{|\theta_x\rangle},{|\theta_y\rangle},{|z\rangle}\right\}\equiv\left\{\cos\frac{\theta}{2}{|x\rangle}-\sin\frac{\theta}{2}{|y\rangle},\sin\frac{\theta}{2}{|x\rangle}+\cos\frac{\theta}{2}{|y\rangle},{|z\rangle}\right\}$ and propagate the byproducts $\sigma^x$, $\sigma^y$, and $\sigma^z$, respectively. We enact byproduct propagation via symmetry transformations of future measurement bases, as described in Ref.~\cite{Else2012}. With this, the first two outcomes give the desired rotation by $\theta$ while the third does nothing, so the gate is probabilistic with success probability $\frac{2}{3}$. Rotations about the $x$-axis can be achieved similarly, giving a full set of $SU(2)$ operations. 

To extend the universality of the AKLT state and others like it to entire SPT phases, we introduce three modifications to the usual MBQC procedure, as described in Fig.~\ref{fig:main}. The purpose and justification of each are given in the following section, using the AKLT state and Haldane phase as examples.

\textit{Computation in the Haldane phase}.
We begin this section by introducing the ``mixed state interpretation'' of MBQC that will be used throughout this letter. Here we argue its validity, with a formal proof given in the Supplementary Material. We define a computation by a sequence of $n$ measurement bases, which are fixed modulo byproduct propagation. In general, an input state $|\psi\rangle$ will be taken to a final state ${|\psi_{\vec{s}}\rangle}$ which depends on the measurement outcomes $\vec{s}=(s_1,\dots,s_n)$. Then we measure some observable $O$ on ${|\psi_{\vec{s}}\rangle}$, whose eigenvalues $o_i$ appear with probability $p(o_i|\vec{s})$. To garner measurement statistics of $O$, we must repeat the computation, whereupon the full statistics are given by $p(o_i)=\sum_{\vec{s}} p(o_i|\vec{s})p_{\vec{s}}$ where $p_{\vec{s}}$ is the probability of outcomes $\vec{s}$. These statistics are encoded in the mixed state $\hat{\sigma}=\sum_{\vec{s}} p_{\vec{s}}{|\psi_{\vec{s}}\rangle}{\langle \psi_{\vec{s}}|}$, for instance $\langle O \rangle=\sum_{\vec{s}} p_{\vec{s}}{\langle \psi_{\vec{s}}|}O{|\psi_{\vec{s}}\rangle}\equiv\text{Tr}\,     (O\hat{\sigma})$. Hence in this probabilistic scenario the computational output must be interpreted to be $\hat{\sigma}$.

To determine the mixed state $\hat{\sigma}$, we simply sum over all possible outcomes of each measurement. It is crucial that this sum-over-outcomes is implemented \textit{after} byproduct propagation, making it very different from simply tracing over each spin in the chain. The byproducts accumulated at the end of the computation affect the basis of computational readout, during which we do not sum over outcomes. By analysing the computation in this way, we can design a sequence of measurement bases such that $\hat{\sigma}$ approximates the desired output. If the computation defined by this sequence of measurements is repeated many times, it deterministically produces the desired measurement statistics of any observable $O$, even though each run of the algorithm may produce a different output state that is meaningless on its own. 

\begin{figure} 

\includegraphics[width=\linewidth]{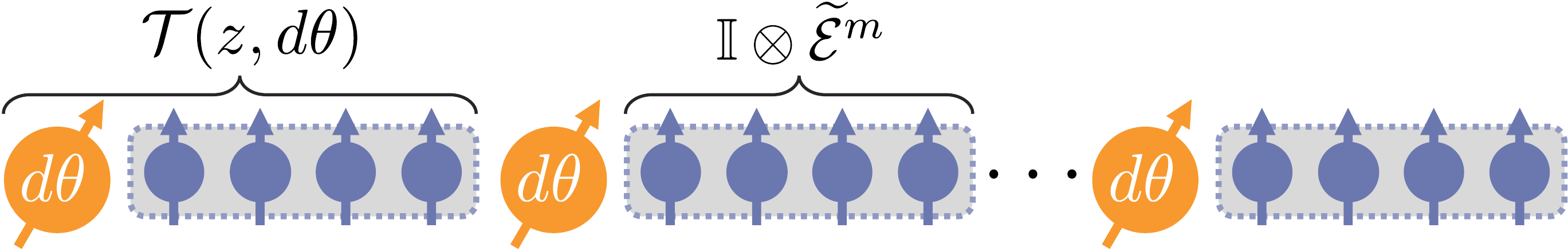}
\caption{Illustration of the measurements needed to execute a rotation about the $z$-axis in the Haldane phase example. Our scheme consists of three modifications to the usual MBQC procedure: (1) In analysis of the scheme, measurement outcomes are summed over, such that the computational output is interpreted as a mixed state, (2) finite rotations are split into smaller pieces $d\theta$ that each differ only slightly from the identity, and (3) consecutive gates are separated by many applications of the identity gate.}
\label{fig:main}
\end{figure}

Let us return to the AKLT state as an example. By measuring in the basis $\mathcal{B}(z,\theta)$ and summing over measurement outcomes, we find that an initial state ${|L\rangle}{\langle L|}$ becomes:
\begin{equation}
\hat{\sigma}= \frac{2}{3} e^{-i\theta\sigma_z /2}{|L\rangle}{\langle L|} e^{i\theta\sigma_z /2} +\frac{1}{3}{|L\rangle}{\langle L|}.
\end{equation}
Since the original gate is probabilistic, this is a mixed state and does not represent unitary evolution. However, for small angles $d\theta$, it is unitary up to first order:
\begin{equation}\label{eq:akltmix}
\hat{\sigma}=e^{-i\frac{2}{3}d\theta\sigma_z /2}{|L\rangle}{\langle L|} e^{i\frac{2}{3}d\theta\sigma_z /2} + \mathcal{O}(d\theta^2).
\end{equation}

So for small rotation angles $d\theta$, the mixed output state is our initial state rotated by a reduced angle $\frac{2}{3}d\theta$ about the $z$-axis. Restriction to gates that are close to the identity is an unavoidable consequence of the mixed state interpretation, and finite rotations must be split into many infinitesimal pieces~\footnote{Each of the infinitesimal rotations that compose a finite gate must be implemented one after another while propagating byproducts.}. The number of measurements needed to execute a unitary gate with rotation angle $\theta$ and admissible error $\epsilon$ is $\mathcal{O}(\theta^2/\epsilon)$; details can be found in the Supplementary Material.

The AKLT state is in the Haldane phase, which we define as the SPT phase protected by on-site $\mathbb{Z}_2\times\mathbb{Z}_2$  symmetry~\cite{Pollman2012}.
 Every state in the Haldane phase can be viewed as an AKLT state with some additional entanglement that encodes the microscopic details of the state. This is formally expressed in terms of the MPS matrices, which factorize as $A^i=
\sigma^i \otimes B^i$ in the wire basis $\mathcal{B}$~\cite{Else2012a}. The Pauli part acts in the \textit{logical subspace} into which information is encoded and processed. The matrices $B^i$ act in the \textit{junk subspace} and contain all of the microscopic details of the state. Importantly, byproduct propagation via symmetry transformations acts only within the logical subspace. This is not a problem for measurements in the wire basis, which evolve the two subsystems independently. But measurements in other bases will mix the junk and logical subspaces, which hides the logical information and introduces an unavoidable outcome dependence into the computation. We now show how the mixed state interpretation allows us to solve both of these problems in a relatively simple way. 

Consider a measurement in the infinitesimally tilted basis $\mathcal{B}(z,d\theta)$. Without loss of generality, we assume that our initial state is factorized across the subspaces as ${|\phi\rangle}{\langle \phi|}\otimes \rho_{\rm{fix}}$, for a particular fixed point state $\rho_{\rm{fix}}$ that will be defined later. If we get the outcome ${|\theta_x\rangle}$ and propagate $\sigma^x$ on the logical subspace, our state becomes:
\begin{align} \label{eq:d1}
&{|\phi\rangle}{\langle \phi|}\otimes \rho_{\rm{fix}}\to 
{|\phi\rangle}{\langle \phi|}\otimes B^x\rho_{\rm{fix}}B^{x\dagger}  \\
&+i\frac{d\theta}{2}\left({|\phi\rangle}{\langle \phi|}\sigma^z\otimes B^x\rho_{\rm{fix}} B^{y\dagger}-\sigma^z{|\phi\rangle}{\langle \phi|}\otimes B^y\rho_{\rm{fix}} B^{x\dagger}\right), \nonumber
\end{align}
up to first order in $d\theta$. We see that the two subsystems are no longer factorized, and the logical state ${|\phi\rangle}{\langle \phi|}$ is no longer accessible. 

To remedy this, we will flow the junk subspace towards a fixed point. This is accomplished by simply measuring a large number of spins in the wire basis. In the mixed state interpretation, a measurement in the wire basis followed by logical byproduct propagation effects the operation $ \mathbb{I}\otimes \sum_i B^i(\cdot)B^{i\dagger}\equiv \mathbb{I}\otimes \widetilde{\mathcal{E}}$. Since every state in the Haldane phase is short-range correlated, the channel $\widetilde{\mathcal{E}}$ will have a unique fixed point, which is $\rho_{\rm{fix}}$, with all other eigenvalues of modulus less than unity \cite{Else2012}. Hence measuring $m$ consecutive spins in the wire basis results in the linear channel $\mathbb{I}\otimes \widetilde{\mathcal{E}}^{m}$ and projects the junk subspace onto the fixed point $\rho_{\rm{fix}}$. The projection occurs exponentially fast over the correlation length $\xi$ of the state.

Applying this to Eq.~\ref{eq:d1}, which must be summed with its counterparts for the other measurement outcomes ${|\theta_y\rangle}$ and ${|z\rangle}$, we find that for large enough $m$, 
\begin{equation}
\hat{\sigma}=\left(\nu {|\phi\rangle}{\langle \phi|}+i\frac{d\theta}{2}(\nu_{xy}+\nu_{yx})\left[{|\phi\rangle}{\langle \phi|},\sigma^z\right]\right)\otimes \rho_{\rm{fix}},
\end{equation}
where we have defined $\lim_{m\to\infty} \widetilde{\mathcal{E}}^{m}(B^{i}\rho_{\rm{fix}}B^{j\dagger})=\nu_{ij}\rho_{\rm{fix}}$ and $\nu=\nu_{xx}+\nu_{yy}+\nu_{zz}$. Up to first order in $d\theta$, this corresponds to a unitary rotation acting on the logical subspace:
\begin{equation}
\mathcal{T}(z,d\theta)=\exp\left\{-id\theta\left(\frac{\nu_{xy}+\nu_{yx}}{2\nu}\right)\sigma^z\right\}.
\end{equation}

Hence, making a measurement in the rotated basis $\mathcal{B}(z,d\theta)$, followed by a series of measurements in the wire basis, produces the desired rotation of the virtual state ${|\phi\rangle}$ up to a scaling factor $\frac{\nu_{xy}+\nu_{yx}}{\nu}$. As long as this factor is non-zero, it can be measured on the chain prior to computation by attempting a finite rotation (split into small pieces), and measuring the reduction in rotation angle~\cite{Raussendorf2017}. The parameters $\nu_{ij}$ contain all relevant microscopic details of our resource state ${|\psi\rangle}$. Since they can be measured during a calibration step, any state in the phase can be used as a resource without prior knowledge of its identity.  

We can repeat the above procedure for rotations about the $x$-axis to generate all of $SU(2)$. Hence every state in the Haldane phase, with the exception of a null subset in which some of the constants $\nu_{ij}$ are 0, has the same computational power as the AKLT state (which satisfies $\nu_{ij}=\frac{1}{3}\ \forall i,j$). To complete the scheme, we would require a method to read out and initialize the virtual state which also works throughout the phase. This can be done without the need of ancillary systems on the boundaries~\cite{Raussendorf2017}.

\textit{Generalization to Other Phases}. 
Our scheme does not depend on any properties that are particular to the Haldane phase, so it can be generalised to a large class of other SPT phases.
A general 1D SPT phase without symmetry breaking is defined with respect to an on-site symmetry group $G$ such that $u(g)^{\otimes N}{|\psi\rangle}={|\psi\rangle}$ for some unitary representation $u$ of $G$. The phase is then labelled by a cohomology class $[\omega]\in H^2(G,U(1))$ in the second cohomology group of $G$ which describes how this symmetry acts in the virtual space~\cite{Chen2011}. 

The Haldane phase is an example of a maximally non-commutative SPT phase, as defined in Ref.~\cite{Else2012a}. Such phases satisfy all conditions needed to apply our methods, namely the existence of a logical subspace and the ability to propagate byproduct operators within it. Indeed, suppose that $G$ is finite abelian and $[\omega]$ is maximally non-commutative, meaning $\{g\in G|\omega (g,g')=\omega (g',g)\  \forall g'\in G\}=\{e\}$. By diagonalizing the representation $u$, we obtain the wire basis $\mathcal{B}=\{{|0\rangle},\dots,{|d-1\rangle}\}$ such that $u(g){|i\rangle}=\chi_i(g){|i\rangle}\ \forall g\in G$ where $\chi_i(g)$ are linear characters of $G$. Maximal non-commutativity then implies the MPS tensor $A^i$ can be written in the wire basis as~\cite{Else2012a}:
\begin{equation}\label{eq:decomp}
A^i=C^i\otimes B^i,
\end{equation}

\noindent where $C^i$ are $D\times D$ unitary  and trace-orthogonal matrices and $D=\sqrt{|G|}$ is the dimension of our logical subspace~\footnote{$D$ is always defined since any finite abelian group which supports a maximally non-commutative factor system must have the form $G\cong G'\times G'$ for some subgroup $G'$~\cite{Berkovich1998}.}. $C^i$ can be determined uniquely from $G$, $[\omega]$, and $\chi_i$ as described in the Supplementary Material. In general, if some group $G$ has a finite abelian subgroup $H$ such that $[\omega|_H]$ is maximally non-commutative, we can make the exact same argument with $H$ taking the place of $G$ everywhere. This means the following results also apply to certain non-abelian groups and Lie groups.

Now we follow the same steps used to perform computation in the Haldane phase. Measurement in the slightly tilted basis $\mathcal{B}(i,j;d\theta,\varphi)=$ $\{ {|0\rangle},\dots,{|i\rangle}+d\theta e^{i\varphi}{|j\rangle},$ ${|j\rangle}-d\theta e^{-i\varphi}{|i\rangle},\dots,{|d-1\rangle}\}$, followed by measurements in $\mathcal{B}$ to drive the junk subspace to a fixed-point state, induces an infinitesimal rotation in the logical subspace:
\begin{align} \label{eq:gates}
&\mathcal{T}(i,j;d\theta,\varphi)= \\ \nonumber &\exp\left\{d\theta \frac{|\nu_{ij}|}{\nu}\left(e^{i(\varphi+\delta_{ij})}C^{i\dagger}C^j-e^{-i(\varphi+\delta_{ij})}C^{j\dagger}C^i\right)\right\},
\end{align}
where $\nu_{ij}=|\nu_{ij}|e^{i\delta_{ij}}$ is as defined earlier and $\nu=\sum_{i=0}^{d-1}\nu_{ii}$. As before, the microscopic details of the state enter only as these measurable constants. Computation can only proceed if these constants are non-zero, which is satisfied for all but a null set of states. With knowledge of these constants, $\mathcal{B}(i,j;d\theta,\varphi)$ can be chosen such that the primitive gates are generated by elements of the set of anti-hermitian operators:
\begin{equation} 
\mathcal{O}=\left\{\alpha C^{i\dagger}C^j-\alpha^*C^{j\dagger}C^i \right\}
\end{equation}
with $i,j=0\dots d-1$, $i\neq j$, $|\alpha|\ll 1$. Furthermore, we have $e^{d\theta A}e^{d\theta B}e^{-d\theta A}e^{-d\theta B}\approx e^{(d\theta)^2[A,B]}$, so that our infinitesimal generators form a real Lie algebra which in turn generates a Lie group $\mathcal{L}[\mathcal{O}]$ of executable gates.

From the above, we can see the main strength of our methods. Given only the algebraic quantities $G$, $u$, and $[\omega]$ which describe the SPT phase of our resource state, we are able to define a complete MBQC scheme, including the set of gates and the measurements needed to execute them. The computational power of each state in the phase is uniformly defined as the Lie group $\mathcal{L}[\mathcal{O}]$, which is completely determined by the same algebraic quantities. This signifies the existence of a deep connection between SPT order and MBQC via the language of group cohomology.

\textit{Determining Computational Power}.
To determine the computational power of a phase, we must identify the Lie group $\mathcal{L}[\mathcal{O}]$. We will do this by taking advantage of the algebraic structure inherited from the SPT phase classification. Consider first the case where the representation $u|_H$ contains all non-trivial characters of the subgroup $H$. This means that $\mathcal{O}$ contains $D^2-1$ trace-orthogonal, antihermitian operators, so $\mathcal{L}[\mathcal{O}]\cong SU(D)$. If the Hilbert space dimension of our physical sites is smaller than $D^2-1$, or certain characters $\chi_i$ do not appear in $u|_H$, $\mathcal{L}[\mathcal{O}]$ may be some Lie subgroup of $SU(D)$. However, with the condition of maximal non-commutativity, this subgroup is always universal on a qudit system, as stated in the following theorem:

\begin{Theorem}
Consider an SPT phase defined by an on-site symmetry group $G$ and cohomology class $[\omega]$. Suppose there exists a finite abelian subgroup $H\subset G$ such that $[\omega|_H]$ is maximally non-commutative, and let $p^n$ be a prime power dividing $\sqrt{|H|}$. Then $\mathcal{L}[\mathcal{O}]\supset SU(p^n)$.
\end{Theorem}

\noindent 
This result, proven in the Supplementary Material, determines the\textit{ minimal} computational power of the phase, which is \textit{independent} of $u$ and hence uniform amongst the phase. This shows that 1D ground states with SPT order are generically useful as MBQC resources.

Beyond this minimal case, $\mathcal{L}[\mathcal{O}]$ can often be expanded to gain additional computational power. For example when $H=(\mathbb{Z}_2)^4$, our theorem guarantees that $SU(2)\subset \mathcal{L}[\mathcal{O}]$, but this can be expanded to either $SU(4)$ or $SU(2)\times SU(2)$ depending on the on-site symmetry representation $u$. So, while changing $u$ is generally considered to not change the SPT phase of a system~\cite{Schuch2011}, it remains an important label for total computational power in our scheme. If, however, we allow ourselves to redefine the locality of measurements by blocking neighbouring sites, $\mathcal{L}[\mathcal{O}]$ will always equal $SU(D)$ after sufficient blocking.

Now we must ask: which symmetry groups protect phases that satisfy our theorem? To answer this in general is a difficult problem of group cohomology, but we can identify some particularly relevant examples. When $G$ is a classical Lie group (except $Spin(4n)$), there is a subgroup of the form $\mathbb{Z}_N\times\mathbb{Z}_N\subset G$ such that $H^2(G,U(1))\cong H^2(\mathbb{Z}_N\times\mathbb{Z}_N,U(1))$ \cite{Duivenvoorden2013,Duivenvoorden2013a}. Since $\mathbb{Z}_N\times\mathbb{Z}_N$ protects a maximally non-commutative phase~\cite{Berkovich1998,Else2012a}, $G$ must protect a phase which satisfies our theorem. The same can be said for any subgroup $G'$ such that $\mathbb{Z}_N\times\mathbb{Z}_N\subset G'\subset G$. This has already been observed in Ref.~\cite{Prakash2014} for the groups $D_4,A_4,S_4\subset SO(3)$, which each contain $\mathbb{Z}_2\times\mathbb{Z}_2$. Another example is the class of groups for which the subgroup $H$ specified in Theorem 1 appears as a (semi)direct factor, that is $G=H'\rtimes H$ for some subgroup $H'$ which could represent eg. time reversal symmetry~\cite{Xiong2016}.

\textit{Conclusion}. 
By introducing three simple modifications to the usual MBQC procedure, we showed that the MBQC power of an SPT-ordered ground state of a spin chain is determined solely by the cohomological information that labels the corresponding SPT phase, and that this power is always sufficient for universal computation on a single qudit. Regarding the algebraic classification of phases of matter and its role in quantum computation, our results show that group cohomology links SPT order and MBQC in 1D, in the same way that modular tensor categories link topological order and topological quantum computation in 2D~\cite{Freedman2003,Nayak2008,Rowell2009}. In each case, the algebraic framework that classifies the phases of matter also classifies their computational properties. Whether this extends to higher dimensions and other types of quantum phases is an intriguing question at the intersection of quantum information and condensed matter physics. There is already evidence that SPT order in higher dimensions can lead to unique computational properties~\cite{Miller2015,Miller2016,Yoshida2015,Yoshida2016,
Nautrup2015,Darmawan2012,Wei2017}. It would also be interesting to see whether the mathematical frameworks that unify topological order and SPT order, such as G-crossed braided tensor categories~\cite{Barkeshli2014}, could also describe computation with systems that have both types of order.

\begin{acknowledgments}
This work is supported by NSERC, the Canadian Institute for Advanced Research (Cifar), and the National Science Foundation under Grant No. PHY 1620252. R.R. is a fellow of the Cifar Quantum Information Science program.
\end{acknowledgments}

\appendix
\section{Supplementary Material}

Here we present the supplementary material for the main text. We begin by proving the validity of the mixed state interpretation used in our methods. We then present an analysis of the error in our scheme and the associated cost of computation. Finally, we prove the main theorem of the text after giving relevant information on the cohomological description of SPT phases.

\section{Mixed State Interpretation}

Here we argue the validity of our mixed state interpretation and the corresponding ``sum over outcomes'' approach. Consider an MBQC scheme in which the logical evolution depends on measurement outcomes, even when supplemented by byproduct propagation. That is, we assume that measuring the next spin in the chain with outcome ${|s\rangle}$ enacts the evolution

\begin{equation}
{|L\rangle}\to \frac{1}{\sqrt{p_s}}\left(\sum_i  {\langle s|i\rangle}A^i\right){|L\rangle}=\frac{1}{\sqrt{p_s}}\Sigma_s\Gamma_s{|L\rangle},
\end{equation}

\noindent where $\Sigma_s$ is the unitary byproduct operator, and $\Gamma_s$ is the desired evolution, which at this point may not yet be unitary. $p_s$ is the probability of obtaining outcome $s$, which appears via the Born rule.

Now a general computation involves the measurement of $m$ spins in any basis with outcomes $\vec{s}=(s_1,\dots,s_m)$, propagating byproduct operators after each step. The initial state ${|L\rangle}$ evolves to a final state $\frac{1}{\sqrt{p_{\vec{s}}}}\Sigma_{\vec{s}}{|L'_{\vec{s}}\rangle}$ where ${|L'_{\vec{s}}\rangle}=\Gamma_{s_m}\dots\Gamma_{s_1}{|L\rangle}$, $\Sigma_{\vec{s}}=\prod_{i=1}^m \Sigma_{s_i}$ is the accumulated byproduct operator, and $p_{\vec{s}}$ is the probability of the outcome string $\vec{s}$. At this point, our resource state ${|\psi\rangle}$ has evolved to

\begin{equation}
{|\psi_{\vec{s}}\rangle}=\frac{1}{\sqrt{p_{\vec{s}}}}\sum_{i_{m+1}\dots i_n} {\langle R|}A^{i_n}\dots A^{i_{m+1}}\Sigma_{\vec{s}}{|L'_{\vec{s}}\rangle}{|i_{m+1}\dots i_n\rangle}.
\end{equation}

Computation ends with readout of some observable $O$ on the final state ${|L'_{\vec{s}}\rangle}$. Our only tool available to do this is a measurement of the next spin in the chain, $m+1$, whose measurement outcome must be used to infer something about $O$. Let $\{{|o_i\rangle}|i=0,\dots,d-1\}$ be the relevant measurement basis for read out of $O$, which has been appropriately modified to propagate the accumulated byproduct operator $\Sigma_{\vec{s}}$ past the readout site. Letting $A[o_\alpha]=\sum_i {\langle o_\alpha|i\rangle}A^i$ we can calculate the probability of obtaining the outcome ${|o_\alpha\rangle}$ given the previous outcomes $\vec{s}$:

\begin{widetext}

\begin{align}
p(o_\alpha|\vec{s})&={\langle o_\alpha|\psi_{\vec{s}}\rangle}{\langle \psi_{\vec{s}}|o_\alpha\rangle} \nonumber \\
&=\sum_{i_{m+2}\dots i_n}{\langle R|}A^{i_n}\dots A^{i_{m+2}}\Sigma_{\vec{s}}A[o_\alpha]\frac{{|L'_{\vec{s}}\rangle}{\langle L'_{\vec{s}}|}}{p_{\vec{s}}}A[o_\alpha]^\dagger\Sigma_{\vec{s}}^\dagger A^{i_{m+2}\dagger}\dots A^{i_n\dagger}{|R\rangle} \nonumber \\
&=\sum_{i_{m+2}\dots i_n}\text{Tr}\, \left[(A^{i_{k+1}\dagger}\dots A^{i_n\dagger}{|R\rangle}{\langle R|}A^{i_n}\dots A^{i_{k+1}}) (A^{i_k}\dots A^{i_{m+2}}\Sigma_{\vec{s}}A[o_\alpha]\frac{{|L'_{\vec{s}}\rangle}{\langle L'_{\vec{s}}|}}{p_{\vec{s}}}A[o_\alpha]^\dagger\Sigma_{\vec{s}}^\dagger A^{i_{m+2}\dagger}\dots A^{i_k\dagger})\right].
\end{align}

\end{widetext}

Where $k$ is taken such that $m\ll k\ll n$.  Now the first term in the trace can be rewritten as $(\mathcal{E}^{\dagger})^{n-k}({|R\rangle}{\langle R|})$ where $\mathcal{E}^\dagger=\sum_i A^{i\dagger}(\cdot)A^i$ is the adjoint of the usual MPS channel. By the injectivity and the canonical form, $\mathcal{E}^\dagger$ has a unique fixed point $\Lambda$ with all other eigenvalues of modulus less than unity~\cite{Perez-Garcia2006}. Now specializing to the relevant case where we have a wire basis in which $A^i=C^i\otimes B^i$, it can be shown that $\Lambda=\frac{1}{D}\mathbb{I}\otimes \widetilde{\Lambda}$ for some density matrix $\widetilde{\Lambda}$~\cite{Else2012}. Then we can rewrite $(\mathcal{E}^{\dagger})^{n-k}({|R\rangle}{\langle R|})=\mathbb{I}\otimes \widetilde{\Lambda}$. By cyclicity of the trace, we can eliminate the unitary parts $C^i$ of the MPS matrices in the second term of the trace. Furthermore, the byproduct $\Sigma_{\vec{s}}$ acts only on the logical subspace, so it can be eliminated in the same way. We are left with:

\begin{equation}
p(o_\alpha|\vec{s})=\text{Tr}\, \left[(\mathbb{I}\otimes \widetilde{\Lambda})(\mathbb{I}\otimes \widetilde{\mathcal{E}}^k)\left(A[o_\alpha]\frac{{|L'_{\vec{s}}\rangle}{\langle L'_{\vec{s}}|}}{p_{\vec{s}}}A[o_\alpha]^\dagger\right) \right],
\end{equation}

\noindent where $\widetilde{\mathcal{E}}=\sum_i B^i(\cdot)B^{i\dagger}$ is the MPS channel associated to the junk space as introduced in the main text. Importantly, we notice that the probability does not depend on the accumulated byproduct operator $\Sigma_{\vec{s}}$ so long as the measurement is done sufficiently far from the boundary. Now, the total probability of obtaining outcome ${|o_\alpha\rangle}$ is given by $p(o_\alpha)=\sum_{\vec{s}} p(o_\alpha|\vec{s})p_{\vec{s}}$. By exploiting linearity within the above expression, we have:

\begin{equation} \label{eq:prob}
p(o_\alpha)=\text{Tr}\,      \left[(\mathbb{I}\otimes \widetilde{\Lambda})(\mathbb{I}\otimes \widetilde{\mathcal{E}}^k)(A[o_\alpha]\hat{\sigma}A[o_\alpha]^\dagger)\right],
\end{equation}

\noindent where we have introduced the mixed state $\hat{\sigma}$ given by:

\begin{align}
\hat{\sigma}&=\sum_{\vec{s}} {|L'_{\vec{s}}\rangle}{\langle L'_{\vec{s}}|} \nonumber \\
&=\sum_{s_m}\Gamma_{s_m}\left(\ \dots \ \left(\sum_{s_1} \Gamma_{s_1}{|L\rangle}{\langle L|}\Gamma_{s_1}^\dagger \right)\  \dots \ \right) \Gamma_{s_m}^\dagger.
\end{align}

So we see that the readout statistics of $O$ are encoded in the mixed state $\hat{\sigma}$ which can be determined by summing over the outcomes of each measurement during computation. This sum occurs after byproduct propagation, and the accumulated byproduct operator has no effect other than changing the final readout basis.

\section{Error and Cost Analysis}

Here we address the question of what the computational cost is to implement a unitary gate with rotation angle $\theta$ while allowing for an error $\epsilon$. The basic argument is that if a rotation about an angle $\theta$ is subdivided into $N$ rotations about an angle $\theta/N$, then the error per individual small rotation is of order $(\theta/N)^2$, and the cumulative error over $N$ such rotations is thus $\epsilon_N = {\cal{O}}(1/N)$. Hence, to get by with a total error of $\epsilon$, we require a subdivision of the rotation into $N\sim 1/\epsilon$ steps.

In more detail, there are two sources of error in the present construction for unitary gates. First, an elementary unitary operation with small rotation angle $d\theta$ incurs an error at second order in $d\theta$, as discussed above. Second, every individual rotation $\mathcal{T}(d\theta)$ requires the junk system to be brought into the fixed point state $\rho_\text{fix}$. This could be achieved by measuring an infinite number of spins in the wire basis. In any reasonable implementation, we measure only a finite number of spins, producing an error in the state of the junk system compared to the true fixed point state. Fortunately, this error is exponentially small in the number $n$ of measured spins,
$$
\| \rho_\text{junk} - \rho_\text{fix}\| \sim \exp(-n/\widetilde{\xi}),
$$
where $\widetilde{\xi}$ is a correlation length $\widetilde{\xi}=-\ln(\lambda_1)$ associated to the junk subsystem, with $\lambda_1$ the second-largest eigenvalue of the channel ${\widetilde{\cal{E}}}$. This correlation length is less than or equal to the true correlation length $\xi$ of our resource state.

The cumulative error due to imperfect preparation of the fixed point state of the junk system is $\epsilon_\text{fix}={\cal{O}}\left(N {\lambda_1}^n\right)$. Therefore, the choice
\begin{equation}\label{Efix}
n = 2\widetilde{\xi} \log N 
\end{equation}
leads to an error $\epsilon_\text{fix}={\cal{O}}(1/N)$, which is the same scaling as for $\epsilon_N$.

Splitting the total error $\epsilon$ evenly between $\epsilon_\text{fix}$ and $\epsilon_N$, $\epsilon_\text{fix}=\epsilon_N =\epsilon/2$, we find that we can achieve a total error of $\epsilon$ for the choice 
$$
N \sim \frac{1}{\epsilon}.
$$
With Eq.~(\ref{Efix}), the total cost of implementing a unitary gate within an error $\epsilon$, in terms of total number $C=N(n+1)$ of local measurements, is thus
\begin{equation}
C = {\cal{O}}\left(\frac{\widetilde{\xi}}{\epsilon} \log\left(\frac{1}{\epsilon}\right)\right).
\end{equation}

\section{Proof of Theorem}

Here we prove the main theorem of the text, repeated here for convenience.

\setcounter{Theorem}{0}

\begin{Theorem}
Consider an SPT phase defined by an on-site symmetry group $G$ and cohomology class $[\omega]$. Suppose there exists a finite abelian subgroup $H\subset G$ such that $[\omega|_H]$ is maximally non-commutative, and let $p^n$ be a prime power dividing $\sqrt{|H|}$. Then $\mathcal{L}[\mathcal{O}]\supset SU(p^n)$.
\end{Theorem}

Our proof utilizes a graphical description, which facilitates the proof of our theorem. It also helps to determine how to most efficiently generate any gate by composing the primitive gates.

\subsection{Maximally Non-commutative Phases}

Before proving this theorem, we give the necessary background information. Consider a matrix product state in an SPT phase labelled by a group $G$ and a cohomology class $[\omega]\in H^2(G,U(1))$. Given a linear representation $u$ of $G$ acting on the physical spins, our MPS tensors satisfy the following symmetry condition:

\begin{equation}
\sum_{j} u(g)_{ij} A^j=V(g)^\dagger A^i V(g), \ \forall g\in G.
\end{equation}

\noindent Therein, the projective representation $V(G)$ can be written $V(g)=\bigoplus_\alpha \mathbb{I}_\alpha\otimes  V_\alpha (g)$. $V_\alpha (G)$ are all of the irreducible representations of $G$ with $\omega$ as their factor system. By considering a finite abelian subgroup $H$ of $G$, we have:

\begin{equation}
u(h)=\bigoplus_{i=0}^{d-1} \chi_i (h)\ \forall h\in H,
\end{equation}

\noindent where each $\chi_i$ is a linear character of $H$. 
We require the following Lemma on projective representations:

\begin{Lemma}
Consider an irreducible projective representation $V$ of an abelian group $H$ with factor system given by the cocycle $\omega$ ($V(h)V(h')=\omega(h,h')V(h')V(h)$). The following conditions are equivalent:

1) $[\omega]$ is maximally non-commutative. That is, $\{h\in H|\omega (h,h')=\omega (h',h) \forall h'\in H\}=\{e\}$.

2) $V$ is the unique projective irrep with cocycle $\omega$, and $V$ has dimension $\sqrt{|H|}$.

3) $\text{Tr}\,       V(h)=\sqrt{|H|}\delta_{h,e}$
\end{Lemma}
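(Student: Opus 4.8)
The plan is to prove the three-way equivalence using the commutator bicharacter and the structure of the twisted group algebra $\mathbb{C}_\omega[H]$. First I would fix notation: since $H$ is abelian, the relation $V(h)V(h')=\omega(h,h')V(hh')$ yields $V(h)V(h')=\beta(h,h')V(h')V(h)$ with $\beta(h,h')=\omega(h,h')/\omega(h',h)$. This $\beta$ is an alternating bicharacter depending only on $[\omega]$, so for each fixed $h$ the map $h'\mapsto\beta(h,h')$ is a character of $H$. The set appearing in condition (1) is exactly the radical $Z=\{h\in H:\beta(h,h')=1\ \forall h'\}$, a subgroup of $H$, and (1) reads $Z=\{e\}$. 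I would then prove the cycle $(1)\Rightarrow(3)\Rightarrow(1)$ and, separately, $(1)\Leftrightarrow(2)$, which together give the full equivalence. I assume the cocycle is valued in $U(1)$ and normalized so that $V(e)=\mathbb{I}$, which is harmless and lets me read $\dim V$ off $\text{Tr}\, V(e)$.

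For $(1)\Rightarrow(3)$: given $h\neq e$, triviality of $Z$ provides an $h'$ with $\beta(h',h)\neq1$; conjugation gives $V(h')V(h)V(h')^{-1}=\beta(h',h)V(h)$, and taking traces forces $\text{Tr}\, V(h)=\beta(h',h)\text{Tr}\, V(h)$, hence $\text{Tr}\, V(h)=0$. For the value at the identity I would invoke projective Schur orthogonality for the irreducible $V$, namely $\frac{1}{|H|}\sum_h|\text{Tr}\, V(h)|^2=1$; since only the $h=e$ term survives, $(\dim V)^2=|H|$, so $\text{Tr}\, V(e)=\dim V=\sqrt{|H|}$, which is (3). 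For $(3)\Rightarrow(1)$: if $z\in Z$ with $z\neq e$, then $V(z)$ commutes with every $V(h)$, so Schur's lemma gives $V(z)=c\,\mathbb{I}$ with $c\neq0$, whence $\text{Tr}\, V(z)=c\sqrt{|H|}\neq0$, contradicting (3); thus $Z=\{e\}$.

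For $(1)\Leftrightarrow(2)$: I would use that $\mathbb{C}_\omega[H]$, with basis $\{e_h\}$ and multiplication $e_h e_{h'}=\omega(h,h')e_{hh'}$, has dimension $|H|$ and is semisimple over $\mathbb{C}$. A direct coefficient comparison shows its center is spanned by $\{e_h:h\in Z\}$, so $\dim(\text{center})=|Z|$. If $Z=\{e\}$ the center is one-dimensional, forcing $\mathbb{C}_\omega[H]\cong M_d(\mathbb{C})$ with $d^2=|H|$; this simple algebra has a unique irreducible module, of dimension $d=\sqrt{|H|}$, which is (2). Conversely, (2) combined with Wedderburn, $\mathbb{C}_\omega[H]\cong\bigoplus_k M_{d_k}(\mathbb{C})$ with $\sum_k d_k^2=|H|$, forces a single block $M_{\sqrt{|H|}}(\mathbb{C})$, whose center is one-dimensional, so $|Z|=1$ and (1) holds.

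The main obstacle is pinning down the exact dimension $\dim V=\sqrt{|H|}$ rather than merely the vanishing $\text{Tr}\, V(h)=0$ for $h\neq e$: this is where a genuinely representation-theoretic input is unavoidable, supplied either by projective Schur orthogonality or by the Wedderburn decomposition together with the identification of the center of $\mathbb{C}_\omega[H]$ with the radical $Z$. The remaining steps, the trace-vanishing argument, the Schur's-lemma step, and the center computation, are routine once this backbone is in place; the only care needed is the normalization of $\omega$ so that the orthogonality relation applies and $\text{Tr}\, V(e)$ equals $\dim V$.
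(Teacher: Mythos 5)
Your proof is correct, and it is genuinely different in character from the paper's: the paper disposes of the lemma in three lines by citing Theorem 6.39 and Theorem 6.13 of Berkovich--Zhmud and Corollary 1.11.13 of Karpilovsky, whereas you reprove the needed facts from scratch. Your three ingredients --- the conjugation/trace argument $\text{Tr}\,V(h)=\beta(h',h)\text{Tr}\,V(h)$ for $h\neq e$, projective Schur orthogonality to pin down $\dim V=\sqrt{|H|}$, and the identification of the center of $\mathbb{C}_\omega[H]$ with the span of $\{e_h:h\in Z\}$ followed by Wedderburn --- are all sound (the center computation is exactly the statement that only $\omega$-regular elements contribute, and for abelian $H$ every conjugacy class is a singleton, so $\dim Z(\mathbb{C}_\omega[H])=|Z|$ counts the irreps). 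One presentational point worth noting: the paper's parenthetical ``$V(h)V(h')=\omega(h,h')V(h')V(h)$'' conflates the factor system with the commutator phase; your reading, taking $\omega$ as the genuine factor system $V(h)V(h')=\omega(h,h')V(hh')$ and introducing the alternating bicharacter $\beta=\omega(h,h')/\omega(h',h)$, is the one consistent with condition (1) and with the references, so you have silently repaired a notational slip. What your approach buys is a self-contained argument that makes transparent \emph{why} maximal non-commutativity forces a unique irrep of square-root dimension (one-dimensional center of the twisted group algebra); what the paper's approach buys is brevity. Either is acceptable; yours would serve readers without access to the cited monographs.
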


\begin{proof}
1) = 2) follows from Theorem 6.39 of \cite{Berkovich1998}.

3)$\to$ 2). If $\text{Tr}\,       V(e)=\text{Tr}\,      I=\sqrt{|H|}$, then $V(H)$ has degree $\sqrt{|H|}$ and furthermore it is the unique $\omega$-irrep by Theorem 6.13 of \cite{Berkovich1998}.

1),2)$\to$ 3). If $V(H)$ is maximally non-commutative, then $\{h\in H|V(h)=\lambda \mathbb{I}\}$ is trivial. Then, since the degree of $V(H)$ is $\sqrt{|H|}$, we have 3) by Corollary 1.11.13 of \cite{Karpilovsky1994}.

\end{proof}
Now suppose $[\omega|_H]$ is maximally non-commutative. Then, using Lemma 1 and the methods of Ref.~\cite{Else2012a}, we can show that any MPS in this phase satisfies:

\begin{equation}
A^i= V(h_i)\otimes B^i,
\end{equation}

\noindent where $V$ is the unique irrep with factor system $\omega|_H$, and $h_i$ is uniquely defined by the relation $V(h_i)V(h)=\chi_i(h)V(h)V(h_i)$ $\forall h\in H$. So we have the required virtual space decomposition $A^i=C^i\otimes B^i$ where $C^i=V(h_i)$. We also have $\sum_j u(h)_{ij}A^j=(V(h)^\dagger\otimes \mathbb{I})A^i(V(h)\otimes \mathbb{I})$, which allows us to propagate all byproducts in the logical subspace using symmetry transformations. 

By Lemma 36 of Ref.~\cite{Berkovich1998} and its proof within, if $[\omega|_H]$ is maximally non-commutative then $H$ must have the form $H_1\times\dots\times H_r$ where $H_i\cong \mathbb{Z}_{p_i^{n_i}}\times\mathbb{Z}_{p_i^{n_i}}$ and $p_i^{n_i}$ is a prime power. Furthermore, $[\omega|_{H_i}]$ is also maximally non-commutative for all subgroups $H_i$. 
By restricting to any such subgroup $\tilde{H}=\mathbb{Z}_D\times\mathbb{Z}_D$ for $D=p^n$, the operators $C^i$ can be taken to be Heisenberg-Weyl operators of the form $Z^iX^j$ where $XZ=\Omega ZX$ and $\Omega=e^{\frac{2\pi i}{D}}$. In this way, we are able to prove that $\mathcal{L}[\mathcal{O}]$ is $SU(D)$ for all physical representations $u$; see below.

\subsection{Graphical Description of Computational Power}

The primitive gates in our scheme that can be executed in a single step are generated by elements from the following set $\mathcal{O}$ of antihermitian opertors:

\begin{equation} \nonumber
\mathcal{O}=\left\{ \alpha C^{i\dagger}C^j-\alpha^*C^{j\dagger}C^i \right\} \ \ \forall i\neq j, \ \forall|\alpha|\ll 1
\end{equation}

\noindent Throughout the following, $\alpha$ always represents an arbitrary complex number of small magnitude, unless stated otherwise. By concatenating these primitive gates, we can execute any unitary gate generated by elements of the algebra $\mathcal{A}[\mathcal{O}]$ defined as the smallest Lie algebra containing $\mathcal{O}$. This algebra, called the dynamical Lie algebra in the context of quantum control, determines the computational power of the resource state; the set of gates $\mathcal{L}[\mathcal{O}]=e^{\mathcal{A}[\mathcal{O}]}$. We call our resource state universal if $\mathcal{L}[\mathcal{O}]=SU(N)$, such that $\mathcal{A}[\mathcal{O}]=su(N)$, the Lie algebra of traceless antihermitian matrices with commutator bracket. $su(N)$ can be spanned by the operators $\alpha X^i Z^j -\alpha^* Z^{j\dagger}X^{i\dagger}$ for $i,j=0,\dots, N-1$, and $\alpha\in\mathbb{C}$. Denote these operators by $O_{i,j}(\alpha)$. Clearly, by the definition of $\mathcal{O}$, we have $\mathcal{A}[\mathcal{O}]\subset su(N)$ always.

\begin{figure*}[htp]
  \centering
  \subfigure[\label{fig:grid1}]{\includegraphics[scale=0.31]{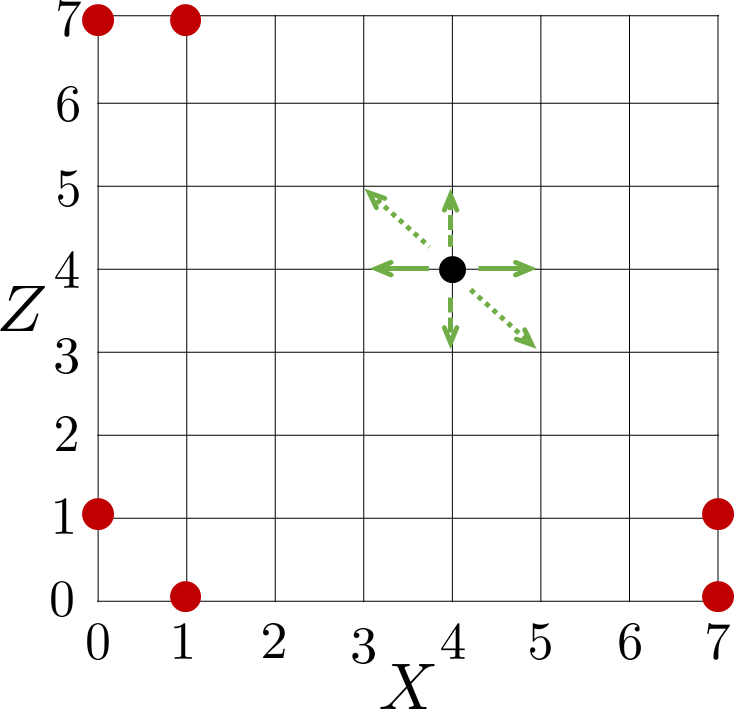}}\quad
  \subfigure[\label{fig:grid2}]{\includegraphics[scale=0.31]{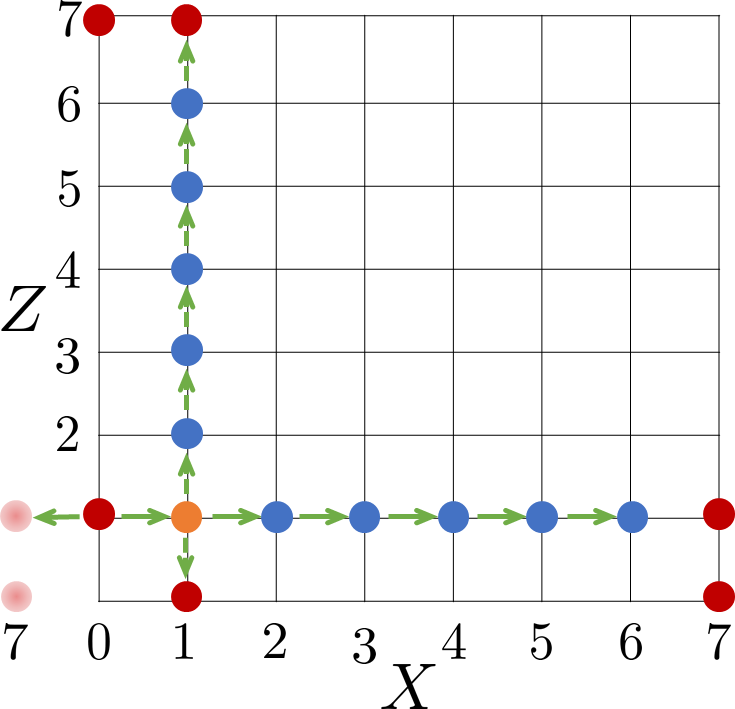}}\quad
  \subfigure[\label{fig:grid3}]{\includegraphics[scale=0.31]{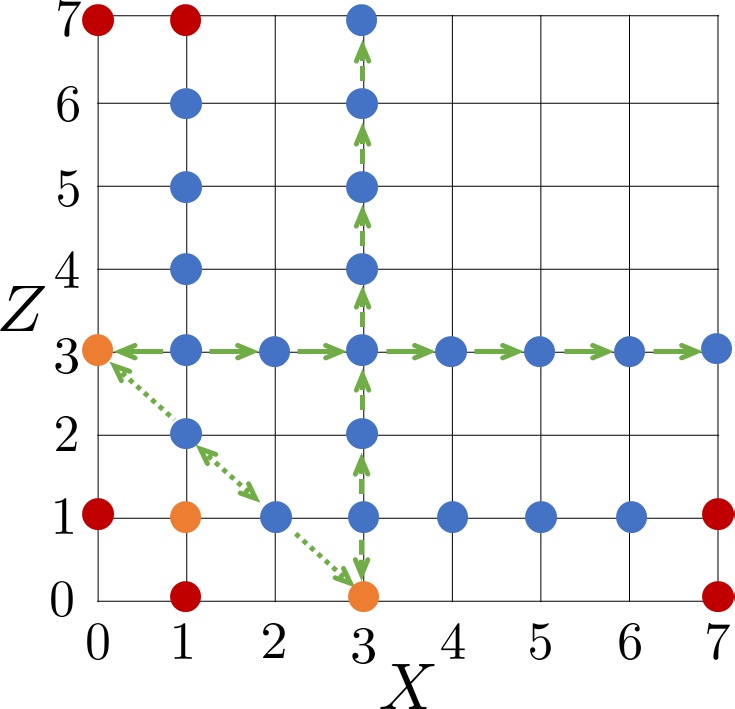}}\quad
  \subfigure[\label{fig:grid4}]{\includegraphics[scale=0.31]{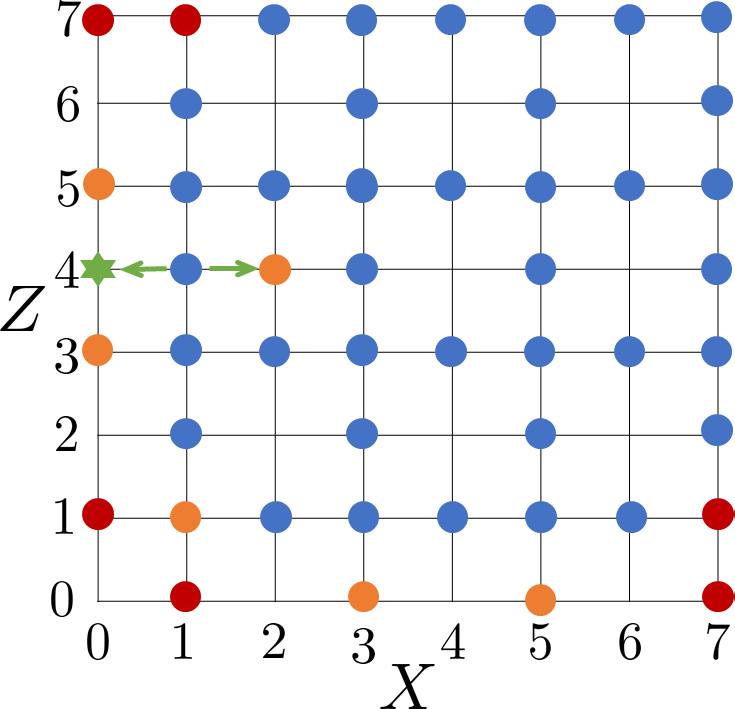}}
  \caption{Illustration of graphical description of $\mathcal{A}[\mathcal{O}]$ and proof of Lemma 3 for $D=8$ and $r=1$. (a) Initial points guaranteed by Lemma 2 (red marks). Arrows indicate basic moves: X move is solid, Z dashed, Y dotted. Starting from a marked point, if any arrow points to another marked point, its opposite can be marked as well. (b) First basic move is done from position (0,1), abusing periodic boundary conditions indicated by faded marks, creating orange mark. Starting from the orange mark, row/column 1 are filled using X/Z moves. (c) A Y move is used to obtain new starting points in row/column 3 (orange marks), which are then filled using X/Z moves. (d) After every other row/column is filled, the point (2,4) is filled using the special rule with the green star indicating the hermitian point (0,4). The remaining points can now be filled with basic moves.}
   \label{fig:grid}
\end{figure*}

The task is then, given $\mathcal{O}$, to determine $\mathcal{A}[\mathcal{O}]$. This is facilitated by a graphical interpretation. The elements of $\mathcal{A}[\mathcal{O}]$ can be indexed by a pair of mod $D$ integers $(i,j)$, which refer to the set of operators $\left\{O_{i,j}(\alpha)\right\}$ for all $\alpha\in\mathbb{C}$. We can construct a $D\times D$ grid, whose vertices correspond to pairs $(i,j)$. We place a marker on a vertex if the corresponding operators are in $\mathcal{A}[\mathcal{O}]$(See Fig.~\ref{fig:grid}). It is then clear that we have $\mathcal{A}[\mathcal{O}]=su(N)$ if and only if we can mark all vertices on the graph. Note that half of the points on the graph are redundant, since $(i,j)$ and $(D-i,D-j)$ refer to the same operators. So whenever $(i,j)$ is marked, we can mark $(D-i,D-j)$ for free.

Different physical representations $u$ determine the operators in $\mathcal{O}$, which in turn determines the initial conditions of our grid. We briefly pause for a Lemma that restricts the possible initial conditions, based on the finite correlation length of our state.

\begin{Lemma}
If $D=p^n$ is a prime power, then there exists an integer $r$ which is not divisible by $p$ such that:

\begin{equation} \label{eq:l1}
\left\{O_{1,0}(\alpha),O_{0,r}(\alpha),O_{D-1,r}(\alpha)\right\}\subset \mathcal{O}
\end{equation}

\end{Lemma}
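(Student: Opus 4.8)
The plan is to translate everything into the symplectic (Heisenberg--Weyl) language of the labels $h_i$. Writing $C^i = V(h_i)$ with $h_i \in \tilde{H}=\mathbb{Z}_D\times\mathbb{Z}_D$ and using $V(h)V(h')\propto V(h+h')$, each generator $C^{i\dagger}C^j$ equals, up to a phase that can be absorbed into the arbitrary $\alpha$, the Heisenberg--Weyl operator labelled by $h_j-h_i$. Hence the grid points initially occupied by $\mathcal{O}$ are precisely the nonzero elements of the difference set $\Delta(S)=\{h_j-h_i\}$ of $S=\{h_0,\dots,h_{d-1}\}$. The three operators requested by the lemma carry labels $(1,0)$, $(0,r)$ and $(-1,r)=(D-1,r)$, so it suffices to exhibit these three labels inside $\Delta(S)$ after a suitable choice of symplectic basis.

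First I would invoke the finite correlation length of the resource state. Short-range correlation forces the MPS to be injective, i.e.\ the algebra generated by the tensors $A^i=C^i\otimes B^i$ is the full matrix algebra on the virtual space. But every product $A^{i_1}\cdots A^{i_L}$ has logical factor proportional to $V(h_{i_1}+\cdots+h_{i_L})$, so the whole generated algebra lies in $\mathrm{span}\{V(h):h\in\langle S\rangle\}\otimes M_{\dim B}$, where $\langle S\rangle$ is the subgroup of $\tilde H$ generated by $S$. Since distinct Heisenberg--Weyl operators are linearly independent, this span is all of $M_D$ only when $\langle S\rangle=\tilde H$. Injectivity therefore forces $\langle S\rangle=\langle\Delta(S)\rangle=\tilde H$ (the two groups agree because $\Delta(S)$ is translation invariant). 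As $\tilde H=(\mathbb{Z}_{p^n})^2$ is a module over the local ring $\mathbb{Z}_{p^n}$, Nakayama's lemma then yields that the reduction of $\Delta(S)$ modulo $p$ spans $\mathbb{F}_p^2$.

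With this in hand I would build the three labels explicitly. Fix any $a\in S$; the translated set $S-a$ still generates $\tilde H$, so its image modulo $p$ spans $\mathbb{F}_p^2$, and we may choose $b,c\in S$ with $b-a$ and $c-a$ linearly independent mod $p$. Both differences are then primitive, so there is a symplectic relabelling $M\in SL_2(\mathbb{Z}_D)$ with $M(b-a)=(1,0)$ and $M(c-a)=(x,y)$, where $y$ is coprime to $p$ because $M$ preserves independence mod $p$. Finally apply the shear $T:(u,v)\mapsto(u-xy^{-1}v,\,v)$, which fixes $(1,0)$ and sends $(x,y)$ to $(0,y)$; moreover the $M$-image of $c-b=(c-a)-(b-a)$ is $(x-1,y)$, which $T$ carries to $(-1,y)$. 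Setting $r:=y$ gives the three labels $(1,0),(0,r),(-1,r)$ inside $\Delta(S)$. Both $M$ and $T$ are realised by conjugating all $C^i$ by a fixed Clifford unitary, which conjugates $\mathcal{A}[\mathcal{O}]$ as a whole and hence leaves $\mathcal{L}[\mathcal{O}]$ unchanged, so this basis change costs nothing.

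The step I expect to be most delicate is the first one: making the passage from \emph{finite correlation length} to the clean algebraic statement $\langle S\rangle=\tilde H$ fully rigorous, in particular verifying that the logical factor of the generated algebra really is confined to $\mathrm{span}\{V(h):h\in\langle S\rangle\}$ and that no cancellations among differently labelled Heisenberg--Weyl operators can enlarge it. The remaining ingredients---Nakayama over $\mathbb{Z}_{p^n}$ and the two $SL_2(\mathbb{Z}_D)$ relabellings---are routine once the generation statement is secured.
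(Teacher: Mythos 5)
Your overall strategy is the same as the paper's: use short-range correlations (injectivity) to show that the logical Heisenberg--Weyl labels generate enough of $\mathbb{Z}_D\times\mathbb{Z}_D$, then locate two suitably non-commuting elements and rotate them into $X$ and $Z^r$ by a unitary change of basis, with the third operator $X^{-1}Z^r$ coming along for free. The packaging differs: the paper normalizes $C^0=I$, argues that primitivity of the logical channel survives this translation, and then shows by a direct counting of commutation phases that some pair must satisfy $\tilde C^a\tilde C^b=\Omega^r\tilde C^b\tilde C^a$ with $p\nmid r$; you instead work additively with the difference set $\Delta(S)$, invoke Nakayama over $\mathbb{Z}_{p^n}$, and realize the relabelling by explicit $SL_2(\mathbb{Z}_D)$ maps. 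Your module-theoretic phrasing is arguably cleaner, and your explicit shear makes the choice of $r$ more transparent than the paper's appeal to Ref.~\cite{Zhou2003}. One simplification: you do not need the conjugating unitary to be Clifford. Any $U$ with $U\tilde C^aU^\dagger\propto X$ and $U\tilde C^bU^\dagger\propto Z^r$ automatically sends $\tilde C^{a\dagger}\tilde C^b$ to something proportional to $X^{-1}Z^r$, so the question of whether your $M$ and $T$ lift to Cliffords for $D=2^n$ never needs to be addressed.

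The one step that is actually wrong as written is the assertion $\langle S\rangle=\langle\Delta(S)\rangle$ ``because $\Delta(S)$ is translation invariant.'' This is false in general: for $D=2$ and $S=\{(1,0),(0,1)\}$ one has $\langle S\rangle=\tilde H$ but $\langle\Delta(S)\rangle=\{(0,0),(1,1)\}$. Your injectivity argument, as stated, only confines the generated algebra to $\mathrm{span}\{V(h):h\in\langle S\rangle\}$ and hence only yields $\langle S\rangle=\tilde H$, which is too weak to feed into Nakayama for $\Delta(S)$. The repair is short but necessary: the labels of all length-$L$ products lie in a \emph{single coset} $Lh_0+\langle\Delta(S)\rangle$, since any two such sums differ by an element of $\langle\Delta(S)\rangle$. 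Spanning $M_D$ at fixed length $L$ therefore requires at least $D^2$ distinct labels in that coset, forcing $\langle\Delta(S)\rangle=\tilde H$ directly (note that the counterexample above indeed fails injectivity). This coset observation is exactly what the paper's normalization $\tilde C^i=C^{0\dagger}C^i$ together with the preservation of primitivity accomplishes. With that sentence inserted, your proof is correct.
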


The proof will be presented shortly. This lemma allows us to define the basic moves that can be used to fill up the graph starting from this initial point. Consider the following commutator of two elements in $\mathcal{A}[\mathcal{O}]$:

\begin{equation}
\left[O_{i,j}(\alpha),O_{1,0}(1) \right]=O_{i+1,j}(\alpha(\Omega^{-j}-1))-O_{i-1,j}(\alpha(\Omega^j-1))
\end{equation}

So, starting from the point $(i,j)$, we get out a linear combination of operators represented by points $(i-1,j)$ and $(i+1,j)$. If either of these points are already filled, we can simply subtract out that part we already have, allowing us to fill the other point since $\alpha$ is a free parameter. We can do the same thing with the operator $O_{0,r}(1)$ and also $O_{D-1,r}(1)$. So our three basic moves are, starting from a marked point $(i,j)$:

X) Inspect points $(i+1,j$) and $(i-1,j)$. If one is marked already and $j\neq 0$, mark the other.

Z) Inspect points $(i,j+r)$ and $(i,j-r$). If one is marked already and $i\neq 0$, mark the other.

Y) Inspect points $(i-1,j+r)$ and $(i+1,j-r)$. If one is marked already and $ir+j\neq 0$, mark the other.

See Fig.~\ref{fig:grid1} for an illustration. Recall that, since we are working with integers mod $D$, our graph has periodic boundaries. Basic moves are forbidden for certain values of $(i,j)$ because these correspond to taking the commutator of commuting operators, which will give 0. The points where a Y move are forbidden form a line with slope $-r$ starting from the origin, 

There is one final rule that applies only when $D$ is even: If at any time a point corresponding to a hermitian operator [i.e. $(\frac{D}{2},0),(0,\frac{D}{2}),(\frac{D}{2},\frac{D}{2})$] can be inspected by a basic X, Y, or Z move, it can be considered marked. This rule can be explained by an example. Consider the commutator:

\begin{equation}
\left[O_{1,\frac{D}{2}}(\alpha),O_{1,0}(\beta)\right]=O_{0,\frac{D}{2}}(2\alpha\beta^*)-O_{2,\frac{D}{2}}(2\alpha\beta)
\end{equation}

Since $Z^{\frac{D}{2}}$ is hermitian, we can choose $\alpha=\beta$ and annihilate the first term automatically, leaving the second term with the free coefficient $\alpha$. This allows us to mark $O_{2,\frac{D}{2}}(\alpha)$, which in turn allows us to mark $O_{0,\frac{D}{2}}(\alpha)$. This process generalises whenever one of the operators is hermitian, giving the basis for the final rule. With these rules in hand, we have our final lemma:

\begin{Lemma}
With the initial conditions of Lemma 2, each point on the graph can be marked using basic moves. That is, the set of operators in Eq.~\ref{eq:l1} generates $su(p^n)$.
\end{Lemma}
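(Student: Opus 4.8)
The plan is to treat the marking process as a reachability problem on the $D\times D$ grid and to reduce the whole argument to two elementary \emph{filling} facts plus a parity analysis. The first filling fact is that if two horizontally adjacent vertices $(i,j)$ and $(i+1,j)$ are marked in a row with $j\neq 0$, then iterating the X move fills the whole row: from $(i,j)$ one marks $(i-1,j)$, from $(i+1,j)$ one marks $(i+2,j)$, and so on outward. The second is that if $(i,j)$ and $(i,j+r)$ are marked in a column with $i\neq 0$, then iterating the Z move fills the whole column, since $r$ is a unit modulo $D=p^n$ and hence generates $\mathbb{Z}_D$. With these, the goal ``mark every vertex except $(0,0)$'' reduces to completely filling (``activating'') enough rows and columns; the vertex $(0,0)$ is excluded because $O_{0,0}(\alpha)\propto\mathbb{I}$ is not traceless.

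First I would assemble the seed configuration. Using the redundancy $(i,j)\sim(D-i,D-j)$, Lemma~2 supplies the marks $(\pm1,0)$, $(0,\pm r)$, $(-1,r)$ and $(1,-r)$. An X move at $(0,r)$ (legal since $r\neq0$) with neighbour $(-1,r)$ marks $(1,r)$; then $(0,r),(1,r)$ activate row $r$ and $(1,0),(1,r)$ activate column $1$, so by redundancy row $-r$ and column $-1$ are active too. The engine for new rows is a Y move that deposits one vertex on the forbidden axis $i=0$ and pairs it with the filled column $1$: from $(1,2r)$ the Y move lands on $(0,3r)$ provided $(2,r)$ is marked, which it is because $(2,r)$ lies in the active row $r$; then $(0,3r)$ and $(1,3r)$ are adjacent and activate row $3r$. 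Iterating, row $(2k+1)r$ is activated using row $(2k-1)r$ as scaffold, so every odd multiple of $r$ becomes an active row. A symmetric engine deposits a vertex on the forbidden axis $j=0$—e.g.\ a Y move from $(2,r)$, enabled by $(1,2r)$ in column $1$, marks $(3,0)$—which pairs with $(3,r)$ in row $r$ to activate a new column.

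Because $r$ is a unit, the odd multiples of $r$ modulo $p^n$ behave differently with the parity of $p$. When $p$ is odd, the odd integers already exhaust $\mathbb{Z}_{p^n}$, so every nonzero row is activated; a single Z step then completes each column through $j=0$, every nonzero column activates in turn, and the union covers all vertices but $(0,0)$. When $p=2$ the odd multiples of the odd number $r$ are exactly the odd residues, so the engines activate precisely the vertices with $i$ or $j$ odd and leave the ``both even'' sublattice untouched. Here I would invoke the hermitian rule: in $[O_{1,D/2}(\alpha),O_{1,0}(\beta)]$ the $(0,D/2)$ component auto-cancels for $\alpha=\beta$ (as $Z^{D/2}$ is hermitian), marking $(2,D/2)$ and giving one foothold in the even sublattice. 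From there the filled odd rows and columns act as scaffolding for \emph{double} steps: an X move centred at the marked odd vertex $(i+1,j)$ marks $(i+2,j)$ from a marked even $(i,j)$ when $j\neq0$, and a Z move centred at $(i,j+r)$ gives steps of $2r$ in $j$ when $i\neq0$. Since $2$ and $2r$ each generate $2\mathbb{Z}_D$, these sweep out the whole even sublattice, with the boundary vertices $i=0$ or $j=0$ reached by a final scaffolded step.

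The main obstacle is the $p=2$ analysis: the parity obstruction is precisely what forces the hermitian rule into the proof, and the delicate part is to check that the non-degeneracy side-conditions never block a required move—that the seeding Y moves avoid the forbidden line $ir+j\equiv0$, and that each scaffolded double step into the even sublattice has its odd neighbour available and respects $i\neq0$ or $j\neq0$ as needed. I would also isolate the small base cases ($D=p$, where the two seed rows already exhaust the nonzero rows, and $D=2$, where the even sublattice is trivial), since there the seeding engine is either unnecessary or degenerate. The remaining ingredients—the two filling facts, the redundancy, and the bookkeeping of which multiples of $r$ have been reached—are routine once the parity picture is in place.
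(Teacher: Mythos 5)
Your proposal is correct and follows essentially the same route as the paper's proof: seed row $r$ and column $1$ from the Lemma~2 marks, propagate to the odd multiples of $r$ via Y moves, split on the parity of $p$ (with redundancy $(i,j)\sim(-i,-j)$ covering the even multiples when $p$ is odd), and invoke the hermitian rule to break into the even sublattice when $p=2$ — the paper merely does $r=1$ first and dismisses general $r$ as identical, whereas you carry $r$ throughout and spell out the $p=2$ double-step completion in more detail. One small slip: your parenthetical claim that for $D=p$ ``the two seed rows already exhaust the nonzero rows'' is only true for $p\le 3$; for $p\ge 5$ the Y-move engine is still needed, but it works there without obstruction, so nothing breaks.
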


\begin{proof}
We prove first for the case $r=1$, and comment on general $r$ at the end. Based on Lemma 2, our initially marked points include $(1,0),(0,1),(D-1,1)$ where $D=p^n$. We also get $(D-1,0),(0,D-1),(1,D-1)$ by the aforementioned redundancy of the points (See Fig.~\ref{fig:grid1}) We start with an X move from $(0,1)$. Since $(D-1,1)$ is filled, we can fill $(1,1)$. Using a sequence of X/Z moves, we get $(i,1)/(1,i)$ for all $i$ (See Fig.~\ref{fig:grid2}). Now we perform a Y move from $(2,1)/(1,2)$ to get $(3,0)/(0,3)$. Again using a sequence of X/Z moves, we get $(i,3)/(3,i)$ for all $i$ (See Fig.~\ref{fig:grid3}). Continuing in this fashion, we can fill every other row and every other column. We must now separate the proof into two cases:

Case 1: $p$ odd $(p\neq 2)$. In this case, the periodic boundary conditions mean that filling every other row/column will in fact fill every row/column. So the above procedure is enough to fill the grid.

Case 2: $p=2$. Here, filling every other row/column will miss half of the rows/columns, so we must use the final rule involving hermitian operators to continue. We use an X move at $(1,\frac{D}{2})$. Since $(0,\frac{D}{2})$ corresponds to a hermitian operator, we mark $(2,\frac{D}{2})$ for free (See Fig.~\ref{fig:grid4}). It is now clear that we can mark all remaining point using basic moves. 

A final check is that we did not perform any forbidden moves in the above procedure; this can be easily verified. The case $r\neq 1$ is almost identical. Our initially marked points include  $(1,0),(0,r),(D-1,r)$. A key observation is that, since $r$ is coprime with $D$, all integers $0,\dots,D-1$ can be obtained as multiples of $r$. Then we can repeat the above procedure of filling rows and columns one by one. The only difference is the order in which they are filled. 
\end{proof}

By our two lemmas, we have $\mathcal{A}[\mathcal{O}]=su(p^n)$ in every case. Since $p^n$ was an arbitrary divisor of $|H|$, we have completed the proof of the theorem. We finish with the proof of Lemma 2.

\begin{proof} \textit{of Lemma 2}. 
It is convenient to assume that $C^0=I$. This can always be done by enacting a transformation $C^i\to \tilde{C}^i=C^{0\dagger}C^i$. Such a transformation does not change $\mathcal{O}$; it is just a relabelling of the elements. Define the set $\tilde{\mathcal{C}}=\left\{ \tilde{C}^i,i=0\dots d-1\right\}$. 

In order to impose further structure on $\tilde{\mathcal{C}}$, we use injectivity. Namely, the fact that our MPS is short-range correlated implies that the set of products $\{A^{i_1}A^{i_2}\dots A^{i_L}\}$ spans the space of all complex matrices for large enough $L$. By tracing out the junk subspace corresponding to the matrices $B^i$, we see that this property holds on the logical subspace alone. That is, every $D\times D$ matrix can be expressed as a linear combination of products $C^{i_1}C^{i_2}\dots C^{i_n}$. 

This property holds also for the matrices $\tilde{C}^i$. To see this, we use the fact that the spanning property is equivalent to \textit{primitivity} of the channel $C(X)=\sum_i C^i X C^{i\dagger}$, which means that $C^{\circ L} (X)$ has full rank for all $X$ \cite{Sanz2010}. Since all Heisenberg-Weyl operators commute up to a phase, we have $\tilde{C}^{\circ L} (X)=(C^{0\dagger})^L C^{\circ L }(X) (C^0)^L$ which shows clearly that $\tilde{C}(X)$ is primitive as well by the unitarity of $C^0$. 

Now, since the matrices $\tilde{C}^i$ span all matrices with their products, and their products are always Heisenberg-Weyl operators, they must generate the entire set of Heisenberg-Weyl operators, up to complex phases. This means we must have a pair of operators $\tilde{C}^a$ and $\tilde{C}^b$ such that $\tilde{C}^a\tilde{C}^b=\Omega^r \tilde{C}^b\tilde{C}^a$ where $p$ does not divide $r$. If not, define the numbers $r_{ij}$ by $\tilde{C}^i\tilde{C}^j=\Omega^{r_{ij}} \tilde{C}^j\tilde{C}^i$. Then a commutator of arbitrary elements can be written:

\begin{equation}
\prod_i (\tilde{C}^i)^{a_i}\prod_j (\tilde{C}^j)^{b_j}=\Omega^{\sum_{ij} a_i  b_j r_{ij}}\prod_j (\tilde{C}^i)^{b_j}\prod_i (\tilde{C}^i)^{a_i}
\end{equation}

If $p|r_{ij}$ for all $i,j$, then $p|\sum_{ij} a_i  b_j r_{ij}$, which cannot always be true if we have a generating set. Finally, for any unitary operators that commute like $\tilde{C}^a\tilde{C}^b=\Omega^r \tilde{C}^b\tilde{C}^a$ there exists a unitary $U$ such that $UC^aU^\dagger=X$ and $UC^bU^\dagger=Z^r$ (see Ref.~\cite{Zhou2003} for the case $r=1$, which generalizes easily). In the basis defined by $U$, we have $I,X,Z^r\in \tilde{\mathcal{C}}$, which gives the claimed operators in $\mathcal{O}$. 
\end{proof}

\bibliographystyle{apsrev4-1}
\bibliography{bibliography}

\end{document}